\documentclass[%
 twocolumn,
 superscriptaddress,
 pre,
 floatfix,
 aps
]{revtex4-2} 

\usepackage[utf8]{inputenc}
\usepackage[ruled,vlined,linesnumbered]{algorithm2e}
\usepackage{amsmath}
\usepackage{amsthm}
\usepackage{amssymb}
\usepackage{bookmark}
\usepackage{hyperref} 
\usepackage{xcolor}
\hypersetup{
    colorlinks=true,
    linkcolor={red!50!black},
    filecolor={green!50!black},
    citecolor={green!50!black},
    urlcolor={blue!80!black},
}
\usepackage[capitalise]{cleveref}
\crefname{algorithm}{Alg.}{Algs.}
\Crefname{algorithm}{Algorithm}{Algorithms}
\crefname{appendix}{App.}{App.}
\Crefname{appendix}{Appendix}{Appendices}
\crefname{corollary}{Corol.}{Corolls.}
\Crefname{corollary}{Corollary}{Corollaries}
\crefname{conjecture}{Conjecture}{Conjectures}
\Crefname{conjecture}{Conjecture}{Conjectures}
\crefformat{conjecture}{Conjecture~#2#1#3}
\crefmultiformat{conjecture}{Conjectures~#2#1#3}{\ and~#2#1#3}{, #2#1#3}{\ and~#2#1#3}
\crefname{definition}{Def.}{Defs.}
\Crefname{definition}{Definition}{Definition}
\crefname{figure}{Fig.}{Figs.}
\Crefname{figure}{Figure}{Figures}
\crefname{lemma}{Lemma}{Lemmas}
\Crefname{lemma}{Lemma}{Lemmas}
\crefname{proposition}{Prop.}{Props.}
\Crefname{proposition}{Proposition}{Propositions}
\Crefname{section}{Section}{Sections}
\crefname{section}{Sect.}{Sect.}
\crefname{subsection}{Sect.}{Sect.}
\Crefname{subsection}{Section}{Sections}
\crefname{subsubsection}{Sect.}{Sect.}
\Crefname{subsubsection}{Section}{Sections}
\crefname{table}{Table}{Tables}
\Crefname{table}{Table}{Tables}
\crefname{theorem}{Thm.}{Thms.}
\Crefname{theorem}{Theorem}{Theorems}
\usepackage{mathtools} 
\usepackage{MnSymbol}
\PassOptionsToPackage{hyphens}{url} 
\usepackage{scalerel}
\usepackage{footnote}
\makesavenoteenv{tabular}
\makesavenoteenv{table}
\usepackage{xfrac} 
\usepackage{xspace} 
\usepackage{booktabs}
\usepackage{multirow}




\newcommand{\mpara}[1]{\medskip\noindent{\bf #1}}

\DeclarePairedDelimiter\abs{\lvert}{\rvert} 
\makeatletter
\let\oldabs\abs%
\def\abs{\@ifstar{\oldabs}{\oldabs*}}
\makeatother
\newcommand{\bfc}[1]{\ensuremath{\mathsf{b}\lparen#1\rparen}} 
\newcommand{\bfcn}[2]{\ensuremath{\mathsf{b}_{#1}\lparen#2\rparen}} 
\newcommand{\bfcnn}[3]{\ensuremath{\mathsf{b}_{#1}\lparen#2,#3\rparen}} 
\newcommand{\card}[1]{\ensuremath{\abs{#1}}\xspace} 
\newcommand{\degree}[2]{\ensuremath{\mathsf{d}_{#1}\lparen#2\rparen}\xspace} 
\newcommand{\distrib}{\ensuremath{\pi}\xspace} 
\newcommand{\ensemble}{\ensuremath{\mathcal{H}}\xspace} 
\newcommand{\graphspace}{\ensuremath{\mathcal{G}}\xspace} 
\DeclarePairedDelimiter\mylist{\langle}{\rangle} 
\newcommand{\neigh}[2]{\ensuremath{\Gamma_{#1}\lparen#2\rparen}\xspace} 
\newcommand{\suchthat}{\ensuremath{\mathrel{:}}\xspace} 
\newcommand{\sw}{\ensuremath{\mathsf{sw}}\xspace} 


\newtheorem{theorem}{Theorem}
\newtheorem{definition}{Definition}
\newtheorem{lemma}{Lemma}
\newtheorem{fact}{Fact}

\newenvironment{squishlist}
{\begin{list}{$\bullet$}
 {\setlength{\itemsep}{0pt}
     \setlength{\parsep}{3pt}
     \setlength{\topsep}{3pt}
     \setlength{\partopsep}{0pt}
     \setlength{\leftmargin}{1.5em}
     \setlength{\labelwidth}{1em}
     \setlength{\labelsep}{0.5em} } }
{\end{list}}

\newcommand{\changed}[1]{{\color{black} #1}}
\newcommand{\rev}[1]{{\color{black} #1}}

\begin{document}

\title{\texorpdfstring{An impossibility result for Markov Chain Monte Carlo
sampling\\from micro-canonical bipartite graph ensembles}{An impossibility result for Markov Chain Monte Carlo
sampling from micro-canonical bipartite graph ensembles}}

\author{Giulia Preti}
\email{giulia.preti@centai.eu}
\author{Gianmarco De Francisci Morales}
\email{gdfm@acm.org}
\affiliation{CENTAI, Corso Inghilterra 3, 10138 Turin, Italy}
\author{Matteo Riondato}
\email{mriondato@amherst.edu}
\affiliation{Amherst College, 25 East Drive, Amherst, 01002 Massachusetts, USA}

\begin{abstract}
  Markov Chain Monte Carlo (MCMC) algorithms are
  commonly used to sample from graph ensembles. Two graphs are neighbors in the
  state space if one can be obtained from the other with only a few modifications, e.g., edge rewirings.
  For many common ensembles, e.g., those preserving the degree
  sequences of bipartite graphs, rewiring operations involving
  two edges are sufficient to create a fully-connected state
  space, and they can be performed efficiently.
  We show that, for ensembles of bipartite graphs with fixed
  degree sequences and number of \emph{butterflies} ($k_{2,2}$ bi-cliques),
  there is no universal constant $c$ such that a rewiring of at most $c$ edges at every step is
  sufficient for \emph{any} such ensemble to be fully connected. Our proof
  relies on an explicit construction of a family of pairs of graphs with the
  same degree sequences and number of butterflies, with each pair
  indexed by a natural $c$, and such that any sequence of rewiring operations
  transforming one graph into the other \emph{must} include at least one
  rewiring operation involving at least $c$ edges.
  Whether rewiring these many edges is \emph{sufficient} to guarantee
  the full connectivity of the state space of any such ensemble remains an open
  question. Our result implies the impossibility of developing efficient,
  graph-agnostic, MCMC algorithms for these ensembles, as the necessity to
  rewire an impractically large number of edges may hinder taking a step on the
  state space.
\end{abstract}

\maketitle

\section{Introduction}\label{sec:intro}
\enlargethispage{0.5\baselineskip}

Testing the statistical significance of properties of an observed network is
a fundamental problem in network science~\cite{newman2003structure}. The significance of
the observed value is tested against a null model, an ensemble
$\ensemble=(\graphspace, \changed{\distrib})$ composed of the set $\graphspace$
of possible graphs that can be realized under the null hypothesis and a
probability distribution $\changed{\distrib}$ over $\graphspace$. 
One typically selects
some descriptive characteristics of the observed network, and either defines
$\graphspace$ and $\changed{\distrib}$ in such a way that the \emph{expectations} w.r.t.\
$\changed{\distrib}$ of these characteristics over $\graphspace$ are the same as the
observed \changed{ones} (a.k.a.\ the \emph{canonical} model), or defines $\graphspace$ as the set of all and only the
graphs with \emph{exactly the same} values for the characteristics as the observed
network, and $\changed{\distrib}$ can be any distribution, often the uniform.
Once the null model is defined, one proceeds by sampling several graphs from this
ensemble. These graphs are used to approximate the distribution of the test
statistic of interest under the null hypothesis. By comparing the observed
statistic to this distribution one can compute an empirical $p$-value.

For example, the widely used ``configuration
model''~\citep{fosdick2018configuring} considers the set of graphs with the same
degree sequence as the observed network and the uniform distribution. This
model has been instrumental in determining that clustering, assortativity, and community 
structure in real networks are not solely dependent on node degrees, hence highlighting their
significance~\citep{cimini2019statistical}. However,
the configuration model fails to generate graphs with a local structure similar to
the observed graph~\citep{stanton2012constructing}. Researchers have thus
explored alternative null models that sample from graph families defined by more
complex characteristics of the observed graph, such as joint degree
distribution\rev{~\citep{mahadevan2006systematic,stanton2012constructing,orsini2015quantifying,bassler2015exact}}, core-value
sequence~\citep{van2021random}, and local triangle-count
sequence~\citep{newman2009random}.

In this work, we focus on bipartite graphs, i.e., networks whose nodes 
can be partitioned into two classes such that all edges go from one class to the other.
Formally, a bipartite graph is a tuple $G \doteq (L, R, E)$, where $L$ and $R$ are
disjoint sets of nodes called \emph{left} and \emph{right} nodes, respectively,
and $E \subseteq L \times R$ is a set of edges connecting nodes in $L$ to nodes
in $R$. 
We consider undirected
bipartite graphs, but for ease of presentation, we denote any edge $(u,a)$ so
that $u \in L$ and $a \in R$.
For any vertex $v \in L \cup R$ we denote with $\neigh{G}{v}$ the set of
\emph{neighbors} of $v$, i.e., the vertices to which $v$ is connected by an edge
in $G$, and we define the \emph{degree} $\degree{G}{v}$ of $v$ in
$G$ as $\degree{G}{v} \doteq \card{\neigh{G}{v}}$. Assuming an arbitrary but fixed
labeling $u_1,\dotsc, u_{\card{L}}$ (resp.~$a_1,\dotsc,a_{\card{R}}$) of the
nodes in $L$ (resp.~$R$), the vector $\mylist{\degree{G}{u_1},
\dotsc, \degree{G}{u_{\card{L}}}}$ (resp.~$\mylist{\degree{G}{a_1},
\dotsc, \degree{G}{a_{\card{R}}}}$) is known as the \emph{left (resp.~right)
degree sequence} of $G$.

Bipartite networks occur naturally in many applications: when representing words and
documents~\citep{zha2001bipartite}, items and itemsets~\citep{preti2022alice},
higher-order networks such as hypergraphs and simplicial
complexes~\citep{berge1970certains}, and many more. 
Null models and graph
ensembles can also be defined on bipartite 
graphs~\citep{saracco2015randomizing,preti2022alice,NealCGGSSSUWS23}. For example,
\citet{preti2022alice} introduce a null model that preserves the bipartite
joint adjacency matrix \changed{(i.e., the matrix whose $(i, j)$-th entry is the
number of edges connecting nodes from $L$ with degree $i$ to nodes in $R$ with
degree $j$)}, of an observed network (thus the degree sequences and the number
of \emph{caterpillars}, i.e., paths of length 3), and give Markov Chain Monte
Carlo (MCMC) algorithms to sample from this null model. Null models for
bipartite graphs are also of particular interest because they align with null
models for 0--1 binary matrices~\citep[Ch.\ 6]{ryser1963combinatorial}.
For example, preserving
the degree sequences in a bipartite graph corresponds to preserving the row and column marginals of the corresponding 
bi-adjacency matrix, and several MCMC algorithms have been developed to sample from this
null model~\citep{carstens2015proof,verhelst2008efficient,wang2020fast,kannan1999simple,strona2014fast}.

We consider graph ensembles for which $\graphspace$ is the set of all bipartite
graphs $G \doteq (L, R, E)$ that share the same degree sequences and the same number of
\emph{butterflies}, i.e., $k_{2,2}$ bi-cliques\changed{, defined as follows.}

\begin{definition}[Butterfly]\label{def:butterfly}
  Let $G \doteq (L, R, E)$ be a bipartite graph. Two distinct nodes $u,v \in
  L$ and two distinct nodes $a,b \in R$ \emph{belong} to the \emph{butterfly}
  $A=\{u, v, a, b\}$ in $G$ if and only if $\{(u,a), (u,b), (v,a), (v,b)\} \subseteq E$.
\end{definition}

The following result, whose proof is immediate, gives an expression for the
number of butterflies to which two nodes both belong.

\begin{fact}\label{lem:bfcnn}
  Let $G \doteq (L, R, E)$ be a bipartite graph, and let $u$ and $v$ be
  distinct nodes in $L$. The number $\bfcnn{G}{u}{v}$ of butterflies in $G$ to
  which both $u$ and $v$ belong is
  \[
    \bfcnn{G}{u}{v} = \binom{\card{\neigh{G}{u} \cap \neigh{G}{v}}}{2}\,,
  \]
  where we assume $\binom{0}{2}=\binom{1}{2}=0$. A similar result holds for any
  two distinct nodes in $R$.
\end{fact}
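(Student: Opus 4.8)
The plan is to set up a bijection between the butterflies of $G$ that contain both $u$ and $v$ and the two-element subsets of the common neighborhood $\neigh{G}{u} \cap \neigh{G}{v}$; counting the latter immediately yields the claimed binomial coefficient.

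First I would record the structural observation that, by \cref{def:butterfly}, every butterfly is a set of four nodes comprising exactly two left nodes and two right nodes. Since $u$ and $v$ are distinct nodes of $L$ by hypothesis, any butterfly to which both belong must use $u$ and $v$ as precisely its two left nodes; such a butterfly is therefore of the form $\{u,v,a,b\}$ with $a,b$ two distinct nodes of $R$. This is the only point that requires a moment's care, and it is exactly where bipartiteness and the distinctness of $u$ and $v$ enter: neither $u$ nor $v$ can appear on the right side of a butterfly, and both must appear, so they are forced into the left part.

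Next I would rewrite the edge condition of \cref{def:butterfly} as a membership condition. The four-node set $\{u,v,a,b\}$ is a butterfly exactly when $\{(u,a),(u,b),(v,a),(v,b)\} \subseteq E$, and this holds if and only if both $a$ and $b$ lie in $\neigh{G}{u} \cap \neigh{G}{v}$. Hence the map sending an unordered pair $\{a,b\}$ of distinct common neighbors to the butterfly $\{u,v,a,b\}$ is a well-defined bijection onto the set of butterflies containing both $u$ and $v$.

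Finally, the number of two-element subsets of a set of cardinality $\card{\neigh{G}{u} \cap \neigh{G}{v}}$ is $\binom{\card{\neigh{G}{u} \cap \neigh{G}{v}}}{2}$, which equals $\bfcnn{G}{u}{v}$; the convention $\binom{0}{2}=\binom{1}{2}=0$ accounts for the degenerate cases in which $u$ and $v$ share at most one neighbor and no such butterfly exists. The statement for two distinct nodes of $R$ follows by exchanging the roles of $L$ and $R$. There is no real obstacle here, as the entire content of the argument is the counting correspondence just described.
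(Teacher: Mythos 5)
Your argument is correct: the bijection between butterflies containing both $u$ and $v$ and two-element subsets of $\neigh{G}{u} \cap \neigh{G}{v}$ is exactly the reasoning the paper has in mind when it declares the proof of this fact ``immediate'' (the paper gives no written proof). Your write-up simply spells out that immediate argument, including the one point worth noting — that bipartiteness forces $u$ and $v$ to be the two left nodes of any such butterfly — so there is nothing to add or correct.
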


For $u \in L$, we denote with $\bfcn{G}{u}$ the number of butterflies in $G$
to which $u$ belongs. It holds
\begin{equation}\label{eq:bfcn}
  \bfcn{G}{u} = \sum_{\substack{v \in L\\v \neq u}} \bfcnn{G}{u}{v} \,.
\end{equation}

The total number $\bfc{G}$ of butterflies in $G$ is then
\begin{equation}\label{eq:bfc}
  \bfc{G} \doteq \frac{1}{2} \sum_{u \in L} \bfcn{G}{u} \,.
\end{equation}

The butterfly, being the smallest
complete subgraph in a bipartite graph, is the most basic building block for composing more complex structures, analogous to the triangle in unipartite graphs.
Consequently, preserving the number
of butterflies emerges as a natural choice when defining null
models that retain more graph properties beyond just the degree sequences.
This concept finds applications in studying, e.g.,
clustering patterns~\citep{nishimura2017swap}.

MCMC methods are a popular approach to sample from an ensemble
$\ensemble=(\graphspace, \changed{\distrib})$.
They define a suitable Markov chain on the
space $\graphspace$ of all possible graphs, such that, \changed{after a sufficient burn-in period}, the state of the
Markov chain is approximately distributed according to $\changed{\distrib}$.
The correctness of this process requires the Markov
chain to be finite, irreducible, and aperiodic~\citep{fosdick2018configuring}.
Efficient sampling requires not only that the Markov chain has a fast mixing
time, but also that the space can be explored quickly, i.e., that
obtaining a neighbor from the current state is efficient.
The double edge swap technique, also known as degree-preserving
rewiring~\citep{cafieri2010loops}, checkerboard
swap~\citep{artzy2005generating}, or tetrad~\citep{verhelst2008efficient}, is a simple yet fundamental randomization technique used to generate a new graph with the same
degree sequence as a given graph. 
Its efficiency stems from the fact that it involves the rewiring of a \emph{small} number of edges.
In bipartite graphs, the most basic rewiring technique is known as the \emph{bipartite swap operation} (BSO).

\begin{definition}[BSO]\label{def:bso}
Let $G \doteq (L, R, E)$ be a bipartite graph and $u \neq v \in L$, $a \neq b \in
R$ such that $(u,a), (v,b) \in E$ and $(u,b), (v,a) \notin E$. 
The BSO involving $(u,a)$ and $(v,b)$ removes $(u,a)$ and $(v,b)$ from $E$, 
and adds $(u,b)$ and $(v,a)$ to $E$.
The resulting bipartite graph $G'=(L, R, (E \setminus \{(u,a), (v,b)\}) \cup \{(u,b),
(v,a)\})$ has the same left and right degree sequence of $G$.
\end{definition}

A more sophisticated operation is the \emph{$q$-edge bipartite swap operation}
($q$-BSO), which \changed{may involve the simultaneous swapping of multiple
edges, potentially between a large set of nodes}, similar
to the $q$-switch operation defined by \citet{tabourier2011generating}.

\begin{definition}[$q$-BSO]\label{def:qbso} 
Let $G \doteq (L, R, E)$ be a bipartite graph and $q \in \mathbb{N}^+$. 
A $q$-BSO is a pair $\sw^q \doteq (S, \sigma)$ with $S = \mylist{e_1, \dotsc, e_q}$ being a vector of $q$ \emph{distinct} edges $e_i \doteq (u_i,a_i) \in E$, and $\sigma$ being a \changed{derangement of $[q]$, i.e., a permutation of $[q]$ with no element in its original position,} s.t.\ $(u_j, a_{\sigma(j)}) \notin E$ for each $j \in [1,q]$.
Replacing each $e_j$ with $(u_j, a_{\sigma(j)})$ generates a bipartite graph $G'=\left(L, R, (E \setminus S) \cup \left\{\left(u_j, a_{\sigma(j)}\right) \text{ for } j \in [1, q]\right\}\right)$ with the same
  left and right degree sequence as $G$.
\end{definition}

According to this definition, a BSO involving $(u,a)$ and $(v,b)$ can be seen as
the $2$-BSO $(\mylist{(u,a),(v,b)}, (2\ 1))$. \changed{Algorithms such as
Verhelst's~\citep{verhelst2008efficient} and Curveball~\citep{strona2014fast}
aim to speed up the sampling from the ensemble of bipartite graphs with
fixed degree sequences. They execute multiple BSO operations at each
step by selecting nodes $u$ and $v$ from $L$ (or $R$) and exchanging multiple
edges originating from $u$ with edges originating from $v$. Conversely, a
$q$-BSO may involve the simultaneous swapping of multiple edges originating from
\emph{multiple} source nodes. Thus, the moves considered by Curveball and
Verhelst's can be expressed as $q$-BSOs, but $q$-BSOs are more expressive, in
the sense that there are $q$-BSOs that do not correspond to possible moves for
these algorithms.}

\section{Connectivity of the state space}\label{sec:conn}

A key requirement to use an MCMC method for sampling from a graph ensemble is
that the state space, where each state corresponds to a graph in the ensemble,
is strongly connected\changed{, i.e., for any two states $G'$ and $G''$ there
is a sequence $\mylist{\rho_1, \rho_2, \dotsc, \rho_{\ell}}$ of graph-transforming operations 
for some $\ell$ (which may depend on the chosen $G'$ and $G'')$, such
that $\rho_1$ transforms $G'$ into some $G_1$ that belongs to the state space,
$\rho_i$ for $1 < i < \ell$ transforms $G_i$ into $G_{i+1}$ that also belongs
to the state space, and $\rho_\ell$ transforms $G_{\ell-1}$ into $G''$. In
other words, a class $\mathcal{C}$ of graph-transforming operations defines a
neighborhood structure of the state space as follows: given any $G$ in the state
space, a neighbor of $G$ is any state that can be obtained by applying a single operation from $\mathcal{C}$, provided that the operation is applicable to
$G$. With this neighborhood structure, the state space is strongly
connected if there is a path from any state to any other state.}

We can immediately see that the state space $\graphspace$ we consider is
\changed{strongly} connected by
sequences of $q$-BSOs when $q$ is large enough, for any left and right degree sequences, and any number of butterflies (see also \citep[Sect.\
3.2.2]{tabourier2011generating}).
In fact, there is always a \changed{$\card{E' \setminus E''}$-BSO} that transforms any bipartite graph $G' \doteq (L,R,E')$ into another bipartite graph $G'' \doteq (L,R,E'')$ with the same left and right degree sequences, and number of butterflies (see Supplementary Material~\citep{PretiDFMR24supp} for details).
While this \changed{fact} ensures the strong connectivity of the state space \changed{via the union of all $q$-BSOs for $q=2,\dotsc,\card{E'}$}, it has little practical relevance, as we now explain. If we
use \changed{all these $q$-BSOs} to define the neighborhood structure of the state space,
the resulting space would be a \emph{complete} graph, i.e., a clique.
Consequently, \changed{drawing, according to any distribution,} a neighbor
\changed{of} a given state would \changed{require a procedure to build an
entirely new bipartite graph with the same degree sequences and the same number
of butterflies \emph{from scratch}. Developing such a procedure seems even
harder than the problem we are attempting to solve, in the same way as \rev{devising algorithms for} building
a bipartite graph with prescribed degree sequences from scratch
\citep{chen2005sequential,Miller_2013,holmes1996uniform,miklos2004randomization,snijders1991enumeration}
is much harder than \rev{devising algorithms for} sampling such a graph using MCMC approaches starting from an
existing one
\citep{verhelst2008efficient,strona2014fast,carstens2015proof,wang2020fast,kannan1999simple}.}

The correct question to ask is therefore the following: is there a \emph{fixed,
universal, constant} $q^*$ such that, for any left and right degree sequences,
and any number of butterflies, any two bipartite graphs with those
left and right degree sequences, and that number of butterflies, are connected
by sequences $\langle \sw^{p_1}_1 = (S_1, \sigma_1), \dotsc, \sw^{p_z}_z =
(S_z,\sigma_z)\rangle$ of $p_i$-BSOs, with $p_i \le q^*$, $i=1,\dotsc,z$, where
$z$ may depend on the two graphs?
\changed{By ``universal'', we mean a quantity that in no way depends on properties
  of the ensemble $(\mathcal{G},\pi)$, including properties of the observed
network.}

Asking this question is reasonable: it is known that $q^*=2$ when one is only interested in preserving the
degree sequences~\citep[Ch.\ 6]{ryser1963combinatorial}, and it is also known that $q^*=2$ for the case of preserving
the degree sequences and the number of paths of length 3 (a.k.a.\
caterpillars), in which case the rewiring operations slightly differ from the
traditional BSOs~\citep{preti2022alice,czabarka2015realizations}.

Additionally, we would like $q^*$ to be small, because sampling a $q$-BSO
is not necessarily efficient: the na\"{\i}ve approach of independently sampling
$q$ edges and then verifying whether they form a valid $q$-BSO has an
increasing probability of failure as $q$
increases~\citep{tabourier2011generating}. 
As a result, the Markov chain would exhibit a high probability of staying in the same state for many consecutive steps,
greatly increasing the mixing time.

For unipartite graphs, it has been proved that $q^*=2$ is not always 
sufficient to ensure strong connectivity of spaces of graphs that share more complex properties~\citep{tabourier2011generating, czabarka2015realizations}.
In this work, we demonstrate the nonexistence of a \emph{fixed,
universal, constant} $q^*$ for the ensemble of bipartite graphs with the same left and right degree sequences and the same number of butterflies.

Let us give some intuition with
an example, which shows that it cannot be $q^* < 4$.
\Cref{fig:counter} shows two bipartite graphs $G_1$ (upper) and $G_2$ (lower) with the same left and right
degree sequences, and the same number of butterflies $\bfc{G_1} = \bfc{G_2} = 10$.
There is no sequence of $q$-BSO\changed{s} for $q < 4$ that, when applied to $G_1$,
generates a graph isomorphic to $G_2$: any $q$-BSO for $q < 4$ applied to $G_1$ either generates a graph with a different number of butterflies, or generates a
graph isomorphic to $G_1$. On the other hand, the $4$-BSO
$([(x_1,y_5), (x_5,y_1), (x_6,y_{10}), (x_{10},y_6)], \sigma)$ with $\sigma(1) =
3$, $\sigma(2) = 4$, $\sigma(3) = 1$, and $\sigma(4) = 2$ ensures that the two
butterflies $\{x_1,x_5,y_1,y_5\}$ and $\{x_6,x_{10},y_6,y_{10}\}$ disappear,
while the two new butterflies $\{x_1,x_{10},y_1,y_{10}\}$ and $\{x_5, x_6, y_5,
y_6\}$ appear, hence preserving the total count $\bfc{G_1}$.

\begin{figure}[thb]
  \centering
  \includegraphics[width=\columnwidth]{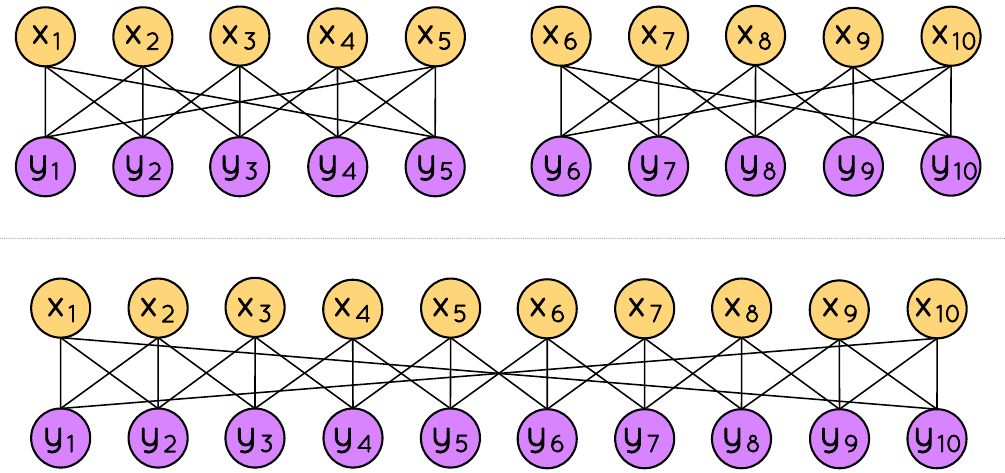}
  \caption{Graphs that are not connected by $q$-BSOs for $q < 4$.}\label{fig:counter}
\end{figure}

Our main result is the following theorem (proof in Supplementary
Material~\citep{PretiDFMR24supp}).

\begin{theorem}\label{thm:main}
  For any $\bar{q} \in \mathbb{N}$ with $\bar{q} > 1$, there exist two non-isomorphic
  bipartite graphs $G_\mathrm{b}$ and $G_\mathrm{e}$ with the same left and
  right degree sequences, and $\bfc{G_\mathrm{b}} = \bfc{G_\mathrm{e}}$, such
  that for any sequence $\langle \sw^{p_1}_1 =
  (S_1, \sigma_1), \dotsc, \sw^{p_z}_z = (S_z,\sigma_z)\rangle$ of
  $p_i$-BSOs with $p_i \in \mathbb{N}^+$, $i=1,\dotsc,z$, that
  transforms $G_\mathrm{b}$ into $G_\mathrm{e}$, there exists $\ell \in
  \{1,\dotsc,z\}$ with $p_\ell \ge \bar{q}$.
\end{theorem}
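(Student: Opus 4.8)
The plan is to prove the theorem constructively: for each $\bar q$ I will exhibit an explicit pair $G_\mathrm{b}, G_\mathrm{e}$ generalizing the four-edge example of \Cref{fig:counter}, and then reduce the statement to a \emph{rigidity} claim about small count-preserving BSOs. The argument splits into four tasks: (i) describe the construction; (ii) check that $G_\mathrm{b}$ and $G_\mathrm{e}$ have identical left and right degree sequences and $\bfc{G_\mathrm{b}}=\bfc{G_\mathrm{e}}$; (iii) check that they are non-isomorphic; and (iv) prove that no sequence of $p_i$-BSOs with every $p_i<\bar q$ that stays inside the ensemble (same degrees and same butterfly count at every intermediate step, as required by the neighborhood structure of \Cref{sec:conn}) can transform one into the other. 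Tasks (ii) and (iii) are routine bookkeeping; task (iv) is where the real work lies.

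For the construction I would take the two independent butterfly pairs of \Cref{fig:counter} and replace them by a cyclic chain of $\bar q$ ``movable'' butterflies. Concretely, I introduce $2\bar q$ distinguished left nodes and $2\bar q$ distinguished right nodes whose neighborhoods overlap in a sliding pattern, so that consecutive distinguished left nodes $u,v$ share exactly two common neighbors, i.e.\ $\card{\neigh{G}{u}\cap\neigh{G}{v}}=2$, and hence contribute one butterfly each, forming a ring of $\bar q$ butterflies. The graph $G_\mathrm{b}$ realizes one ``phase'' of this ring and $G_\mathrm{e}$ the ring rotated by one position; a collection of static ``padding'' butterflies (analogous to the static butterflies implicit in the example) is attached to fix the global count budget, so that the butterfly-count constraint leaves no slack for small moves other than honest rotations. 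Equality of degree sequences is then automatic, since $G_\mathrm{e}$ is obtained from $G_\mathrm{b}$ by a single (large) BSO, which preserves degrees by \Cref{def:qbso}; and $\bfc{G_\mathrm{b}}=\bfc{G_\mathrm{e}}$ follows because the rotation merely permutes the overlapping distinguished pairs, hence the summands of \Cref{eq:bfc}. The intended honest move is the single $2\bar q$-BSO that cyclically re-routes all $2\bar q$ ring edges at once, exactly as the $4$-BSO does in the example.

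Non-isomorphism (task iii) I would obtain from a structural invariant that is insensitive to relabeling but distinguishes the two phases, for instance the cyclic \emph{orientation} of the butterfly ring read off from the profile of higher-order overlaps $\card{\neigh{G}{u}\cap\neigh{G}{v}}$ among distinguished nodes. Since the padding is arranged to break any global automorphism that would equate the two phases, no relabeling carries $G_\mathrm{b}$ to $G_\mathrm{e}$, so $G_\mathrm{b}\not\cong G_\mathrm{e}$.

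The heart of the proof, and the main obstacle, is task (iv). I would define a $\mathbb{Z}/\bar q\mathbb{Z}$-valued potential $\Phi$ -- morally a winding number of the butterfly ring -- with $\Phi(G_\mathrm{b})\neq\Phi(G_\mathrm{e})$, and prove that every BSO of size $p<\bar q$ whose result has the same butterfly count leaves $\Phi$ unchanged; invariance along an arbitrary sequence of such moves then forbids reaching $G_\mathrm{e}$ from $G_\mathrm{b}$, forcing some $p_\ell\ge\bar q$. The delicate point is that a $p$-BSO touches at most $p<\bar q$ left and $p$ right nodes, so it can alter the ring only \emph{locally}; using the codegree formula $\binom{\card{\neigh{G}{u}\cap\neigh{G}{v}}}{2}$ together with \Cref{eq:bfcn}, I would argue that any local re-routing that does not advance every link of the ring simultaneously unbalances $\bfc{\cdot}$, because it destroys more butterflies on the distinguished pairs than it can recreate without spilling into the rigid padding frame. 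Hence the only count-preserving small moves fix $\Phi$. Making this airtight is subtle precisely because one must rule out \emph{all} small count-preserving BSOs, including clever ones that recruit padding edges or temporarily deform butterflies rather than acting on the ring in the obvious way; the crux is to show that the butterfly-count constraint enforces enough locality that $\Phi$ cannot change unless at least $\bar q$ edges move in a single step.
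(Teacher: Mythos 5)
Your write-up is a plan rather than a proof: the step you yourself identify as "the heart of the proof" --- that every $p$-BSO with $p<\bar q$ which preserves the butterfly count also preserves the winding-number potential $\Phi$ --- is never carried out, and $\Phi$ itself is never defined. This is not a cosmetic omission. The intermediate graphs $G^1,\dotsc,G^{z-1}$ are only required to lie in the ensemble (same degree sequences, same $\bfc{\cdot}$); they need not look like a "phase" of your ring at all. So $\Phi$ must be defined on \emph{every} graph in the ensemble and shown invariant under \emph{every} small count-preserving move, and your ring construction is exactly the kind of structure for which this tends to fail: a $2$-BSO can destroy one butterfly while simultaneously creating another, so a global phase change can in principle be achieved by nucleating a local "defect" (a partial rotation) and propagating it around the ring two edges at a time, each intermediate state having the correct butterfly count. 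Ruling this out requires showing that every defect configuration has a strictly different butterfly count, which is a delicate extremal statement you have not supplied and which your construction (consecutive distinguished pairs sharing exactly two neighbours, plus unspecified padding) gives no leverage on. The non-isomorphism claim has the same flavour of unsupported assertion: a rotated ring is generically isomorphic to the original, and "the padding is arranged to break any global automorphism" is not a construction.

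For contrast, the paper avoids the winding-number difficulty entirely by making the obstruction a \emph{counting} rigidity rather than a topological one. It builds $G_\mathrm{b}$ with butterflies split across three pairs $(x_1,x_2)$, $(x_3,x_4)$, $(x_5,x_6)$ sharing $s$, $t$, and $n=\binom{s}{2}+\binom{t}{2}$ neighbours respectively, and $G_\mathrm{e}$ with all $\binom{n+1}{2}$ butterflies concentrated on $(x_5,x_6)$; all right nodes have degree at most $2$. A lemma bounding $\bfcn{G}{u}$ by $\binom{\degree{G}{u}}{2}$ under that degree restriction, with equality only when $u$ shares \emph{all} its neighbours with a single partner, forces the graph just before the step at which $x_5,x_6$ first share $n+1$ neighbours to have exactly the structure of $G_\mathrm{b}$ on $x_1,\dotsc,x_6$; that single step must then destroy all $\binom{s}{2}+\binom{t}{2}$ butterflies on $(x_1,x_2)$ and $(x_3,x_4)$ at once, which costs at least $2(s-1)>\bar q$ edge swaps. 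If you want to salvage your approach, you would need an analogous extremal lemma pinning down all ensemble members near the transition --- at which point you will likely have reproduced the paper's argument rather than found an alternative to it.
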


Our proof consists of two parts. First, we construct two bipartite graphs
$G_\mathrm{b}$ and $G_\mathrm{e}$ with the same left and right degree sequences
(which will depend on $\bar{q}$ as the second largest left degree will be
greater than $\bar{q}$), and the same number of butterflies. Second, we
demonstrate that any sequence of $q$-BSOs applied to $G_\mathrm{b}$ to obtain a
graph isomorphic to $G_\mathrm{e}$ must involve at least one $q$-BSO for $q >
\bar{q}$. Since $\bar{q}$ can be arbitrarily large, a \emph{universal constant}
$q^*$ as above cannot exist.

This theorem proves that it is impossible to design efficient MCMC algorithms
that sample from ensembles $\ensemble=(\graphspace,\changed{\distrib})$ of bipartite graphs with the same
degree sequences and the same number of butterflies, because the state space is
not \changed{strongly} connected by edge swap operations that involve only up to a fixed,
universal, number of edges, as is instead the case for simpler null models.
Rather, the minimum number of edges that \emph{must} be involved depends on
properties of the state space $\mathcal{G}$\changed{, not just of the observed
network}. These may not be easily computable,
as they may not depend \emph{just} on the observed network, if any.

This result has profound implications for the design of network null models and
for network science in general. If it is unfeasible to preserve the occurrences of the simplest 
building block of bipartite graphs (the butterfly), it \changed{becomes
unfeasible to preserve larger structures}.
When dealing with bipartite graphs and complex observed characteristics, ensembles with ``soft'' constraints, where the constraints are retained on average over all the graphs sampled from the ensemble~\cite{robins2007introduction},
might be the only viable option.

\begin{figure}[!t]
  \centering
  \includegraphics[width=\columnwidth]{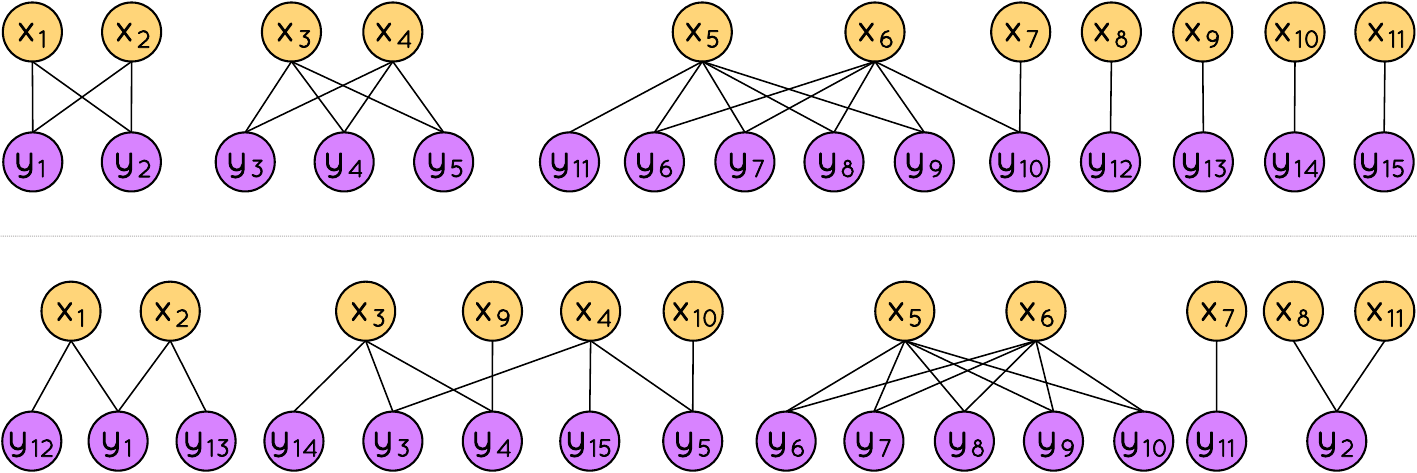}
  \caption{Bipartite graphs generated by our algorithm (see Supplementary
  Material~\citep{PretiDFMR24supp}) for $s = 2$ and $t = 3$.}\label{fig:graphs}
\end{figure}

The algorithm to construct the graphs $G_\mathrm{b}$ and
$G_\mathrm{e}$ is delineated in the Supplementary
Material~\citep{PretiDFMR24supp}. 
We now describe the main characteristics of such graphs.

Let $s$ and $t$ be two naturals with $s > t \ge 2$ and $2 (s - 1) > \bar{q}$. 
We define $n \doteq \binom{s}{2} + \binom{t}{2}$ and $a \doteq ((s+1)\
\text{mod}\ 2) + ((t+1)\ \text{mod}\ 2)$.
The graphs $G_\mathrm{b}$ and $G_\mathrm{e}$ output by the algorithm with
inputs $s$ and $t$ have the following properties:
\begin{enumerate}
  \item $G_\mathrm{b}$ and $G_\mathrm{e}$ have $7+a$ left nodes (denoted with the letter $x$) and $s+t+n+2+a$ right nodes (denoted with the letter $y$); 
  \item $G_\mathrm{b}$ and $G_\mathrm{e}$ have the same left and right degree
    sequences. In particular, $x_1$ and $x_2$ have degree $s$, $x_3$ and $x_4$
    have degree $t$, $x_5$ and $x_6$ have degree $n+1$, $y_i$ has degree 2 for $1 \le i \le s+t+n+1$, and all the other left and right nodes have degree $1$;
  \item $\bfc{G_\mathrm{b}} = \bfc{G_\mathrm{e}} = n + \binom{n}{2}$;
  \item $\card{\neigh{G_\mathrm{e}}{x_5} \cap \neigh{G_\mathrm{e}}{x_6}} = n+1$, which,
  with the previous point, implies that $x_5$ and $x_6$ belong to \emph{all}
  butterflies in $G_\mathrm{e}$, and every other node $y \in R$ belongs to no
  butterfly;
  \item $\card{\neigh{G_\mathrm{b}}{x_5} \cap \neigh{G_\mathrm{b}}{x_6}} = n$,
      $\card{\neigh{G_\mathrm{b}}{x_1} \cap \neigh{G_\mathrm{b}}{x_2}} = s$,
      $\card{\neigh{G_\mathrm{b}}{x_3} \cap \neigh{G_\mathrm{b}}{x_4}} = t$,
      which implies, with point 2, that $x_5$ and $x_6$ do \emph{not} belong to
      \emph{all} butterflies in $G_\mathrm{b}$.
  \end{enumerate}

\Cref{fig:graphs} shows the bipartite graphs generated for $s = 2$ and $t = 3$.

\section{Discussion}\label{sect:conclusions}

MCMC approaches based on swaps of pairs of edges can efficiently sample graphs
from simple ensembles, such as those including graphs with prescribed degree sequences, or
fixed number of paths of length up to three~\citep{preti2022alice}.
The correctness of these approaches relies on the fact that pairwise edge swaps
create a strongly connected state space. They are efficient because proposing a
neighbor to move to is relatively easy, requiring only to be able to efficiently
sample pairs of edges.

In the case of ensembles preserving more complex properties of the networks,
the \changed{strong} connectivity of the state space may require more than two edges to be
swapped at every step, i.e., performing
$q$-switches~\citep{tabourier2011generating}, or $q$-BSOs, for
some value of $q$.

In this work, we consider the ensemble of bipartite graphs with fixed degree
sequences and fixed number of butterflies ($k_{2,2}$ bi-cliques), for its important role in a 
variety of applications, e.g., investigating clustering patterns~\citep{nishimura2017swap}.
We show that the state space is not \changed{strongly} connected by sequences of $q$-BSOs for any \emph{fixed}, \emph{universal},
constant $q$. In other words, the number of edges to be rewired at each
step is upper bounded by a quantity that depends on properties of the graphs in
the ensemble. This result is in strong contrast with the cases for the space of
bipartite graphs with fixed degree sequences, and for that of bipartite graphs
with fixed degree sequences and fixed number of paths of length three (a.k.a.\
caterpillars), where $q=2$ is sufficient for all
ensembles~\citep{preti2022alice,czabarka2015realizations}.

This discovery has far-reaching implications for network science.
First and foremost, we rule out the possibility of designing \emph{efficient} MCMC
algorithms for sampling from the space of bipartite graphs with fixed degree
sequences and fixed number of butterflies, specifically,
from the micro-canonical ensemble that maintains \changed{these properties}
exactly.
In fact, we demonstrate the necessity of swaps with size dependent on the
characteristics of the graph space \changed{$\mathcal{G}$, not necessarily just
on the observed network. Finding what this size \changed{$q^*_{\mathcal{G}}$ is
may not even be feasible. It may perhaps be possible to develop an efficient
procedure to find this quantity, but then one also needs an efficient procedure
to, at each step of the Markov chain}, generate a $q$-BSO for $q \le
q^*_{\mathcal{G}}$, to propose a neighbor to move to}. Solving both these
algorithmic questions seem challenging. \rev{Moreover, the lower bound $\bar{q}$
to the size of the BSOs needed to connect the two graphs $G_\mathrm{b}$ and
$G_\mathrm{e}$ from \cref{thm:main} gives only a \emph{necessary} condition for
the strong connectivity of the graph space, not a \emph{sufficient} one: we only
know that one of the BSOs to connect these two graphs must contain more than
$\bar{q}$ edges, but not how exactly how many. Even if we knew exactly this
number $\hat{q}$, there may be other pairs of graphs in the same ensemble
(i.e., with the same degree sequences and number of butterflies) such that any
sequence of BSOs connecting these two graphs must have size even greater than
$\hat{q}$.} Therefore the situation might be even more dire than our findings
suggest.

Given that sampling from a null model that preserves the number of butterflies is impractical,
preserving larger structures seems an even more unattainable task.
A butterfly is at the same time the smallest cycle and the smallest non-trivial
bi-clique, hence it is a basic building block of bipartite graphs. Thus, our
findings present a large obstacle to developing efficient algorithms to
sample from more complex ensembles, and therefore to testing network
properties under more descriptive null models.

\rev{What other options are then available, if any? If one wishes to maintain the
number of butterflies as a hard constraint (i.e., to sample from the micro-canonical ensemble),
one potential approach involves avoiding MCMC algorithms and opting for a direct-sampling algorithm like stub matching~\cite{bollobas1980probabilistic}.
However, such algorithms are already limited to small graph instances, for the case of sampling from the space of graphs with the same degree sequence, due to their complexity scaling quadratically or cubically with the number of nodes, depending on the graph density~\cite{erdHos2009simple,kim2012constructing}.
The straightforward application of existing stub-matching techniques may also suffer from generating graphs with a different number of butterflies, thus leading to a high rejection rate.
Thus, we need to explore alternative methodologies or refine existing stub-matching algorithms to better accommodate these more complex constraints.
Finally, implementations of canonical methods such as the Chung-Lu model~\cite{chung2002average} offer a more efficient alternative, albeit at the cost of imposing a \emph{soft} constraint.
Indeed, while the ensemble average aligns precisely with the desired value of each constraint, individual graph instances may lie far from the desired constraints.
The canonical ensemble brings other challenges, including difficulties in generating graphs that closely match the desired expectations for certain degree distributions (\emph{degeneracy} problem)~\cite{squartini2011analytical,squartini2015unbiased,brissette2022limitations,snijders2006new,Vallarano_2021}.}

Overall, our findings represent a strong negative result that the network
science community needs to reckon with. By showing that this research avenue is
not fruitful, we hope to spur alternative and innovative approaches to designing
null models for graphs, and algorithms for sampling from them.

\mpara{Acknowledgments.}
  MR's work was sponsored in part by NSF grants IIS-2006765 \changed{and
  CAREER-2238693.}

\bibliographystyle{apsrev4-2} 
\bibliography{refs}

\appendix

\section{Proofs}\label{sec:app:proofs}

Our main result, Theorem 1, relies on the following lemma.

\begin{lemma}\label{lem:maxbfcn}
  For any bipartite graph $G \doteq (L, R, E)$ with $\degree{G}{a} \le 2$,
  for each $a \in R$, and any $u \in L$, it holds
  \[
    \bfcn{G}{u} \le \binom{\degree{G}{u}}{2},
  \]
  with equality if and only if there exists $v \in L \setminus \{u\}$, s.t.\
  $\neigh{G}{u} \subseteq \neigh{G}{v}$.
\end{lemma}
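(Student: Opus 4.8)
The plan is to start from the expansion of $\bfcn{G}{u}$ given by \cref{eq:bfcn} and \cref{lem:bfcnn}, namely $\bfcn{G}{u} = \sum_{v \in L \setminus \{u\}} \binom{\card{\neigh{G}{u} \cap \neigh{G}{v}}}{2}$, and to control the sizes of the common-neighborhood sets $S_v \doteq \neigh{G}{u} \cap \neigh{G}{v}$, which are subsets of $\neigh{G}{u}$. The whole argument hinges on the degree constraint $\degree{G}{a} \le 2$ for every $a \in R$.

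First I would establish a \emph{disjointness} claim: the sets $\{S_v : v \in L \setminus \{u\}\}$ are pairwise disjoint. Indeed, a right node $a$ lying in both $S_v$ and $S_{v'}$ for distinct $v, v' \neq u$ would be adjacent to the three distinct left nodes $u, v, v'$, contradicting $\degree{G}{a} \le 2$. Since the $S_v$ are disjoint subsets of $\neigh{G}{u}$, we get $\sum_{v} \card{S_v} \le \card{\neigh{G}{u}} = \degree{G}{u}$. I would then invoke the superadditivity of $m \mapsto \binom{m}{2}$ on the non-negative integers, in the quantitative form $\binom{\sum_i m_i}{2} - \sum_i \binom{m_i}{2} = \sum_{i < j} m_i m_j \ge 0$, together with its monotonicity, to conclude $\bfcn{G}{u} = \sum_v \binom{\card{S_v}}{2} \le \binom{\sum_v \card{S_v}}{2} \le \binom{\degree{G}{u}}{2}$, which is the claimed bound.

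For the equality characterization I would trace when both inequalities are tight, working in the nontrivial regime $\degree{G}{u} \ge 2$ (for $\degree{G}{u} \le 1$ both sides vanish and the statement is degenerate). The outer inequality forces $\sum_v \card{S_v} = \degree{G}{u}$, i.e.\ the disjoint sets $S_v$ cover all of $\neigh{G}{u}$, while the superadditivity step is tight only when at most one $S_v$ is non-empty. Combining the two conditions yields a single $v$ with $S_v = \neigh{G}{u}$, that is $\neigh{G}{u} \subseteq \neigh{G}{v}$. Conversely, if such a $v$ exists then $\card{S_v} = \degree{G}{u}$ and, by the degree-$2$ constraint, each $a \in \neigh{G}{u}$ has $\neigh{G}{a} = \{u, v\}$ exactly, so every other $S_{v'}$ is empty and equality holds.

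The main obstacle is this equality direction rather than the inequality itself: I must make sure the two tightness conditions (full coverage of $\neigh{G}{u}$ by the $S_v$, and concentration on a single $v$) are combined correctly, and in particular that it is precisely the hypothesis $\degree{G}{a} \le 2$ that both forces disjointness in the forward direction and, in the reverse direction, collapses every $\neigh{G}{a}$ to $\{u,v\}$. Getting this dependence clean is what makes the ``if and only if'' go through.
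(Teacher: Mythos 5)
Your proof is correct and follows essentially the same route as the paper's: decompose $\bfcn{G}{u}$ via Fact~1 and Eq.~(1), use the degree-$\le 2$ hypothesis to make the sets $\neigh{G}{u} \cap \neigh{G}{v}$ pairwise disjoint subsets of $\neigh{G}{u}$, and exploit superadditivity of $m \mapsto \binom{m}{2}$, which is exactly what the paper isolates as its technical \cref{lem:techbinom} (proved there by contradiction, whereas you use the identity $\binom{\sum_i m_i}{2} - \sum_i \binom{m_i}{2} = \sum_{i<j} m_i m_j$). If anything your write-up is slightly more explicit, since you state the disjointness step needed to justify $\sum_i \card{S_v} \le \degree{G}{u}$ (left implicit in the paper when it invokes its technical lemma) and you flag the degenerate case $\degree{G}{u} \le 1$, which is irrelevant to the application.
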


\rev{The meaning of this Lemma is that, when \emph{all} nodes in $R$ have degree
  \emph{at most 2}
, each node $u \in L$ can be part of \emph{at most one butterfly for
each unordered pair $(w,z)$ of $u$'s neighbors}. Specifically, if either $w$ or
$v$  has degree $1$, then there can be no butterfly involving $u$, $w$, $z$ and
another node in $L$. If both $w$ and $z$ have degree $2$, there may be at most one
butterfly involving $u$, $w$, $z$ and another node in $L$. If any node in $R$
has degree $d > 2$, a pair of neighbors of $u \in L$ could be part of more than one butterfly together with $u$ (at most $d-1$).}

We use the following technical lemma in the proof of \cref{lem:maxbfcn}.

\begin{lemma}\label{lem:techbinom}
  Let $d \in \mathbb{N}$, $d  \ge 2$. For any sequence $a_1,\dotsc,a_z$  of $1 <
  z \le d$ strictly positive naturals s.t.\ $\sum_{i=1}^z a_i \le d$, it holds
  \[
    \binom{d}{2} > \sum_{i=1}^z \binom{a_i}{2}\enspace.
  \]
\end{lemma}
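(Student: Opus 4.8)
The plan is to exploit the superadditivity of the function $n \mapsto \binom{n}{2}$ together with its monotonicity, $\binom{m}{2} \le \binom{d}{2}$ whenever $m \le d$. The single algebraic fact I rely on is the identity
\[
  \binom{a+b}{2} = \binom{a}{2} + \binom{b}{2} + ab,
\]
which says that merging two blocks of sizes $a$ and $b$ creates exactly $ab$ additional pairs. Iterating this (equivalently, expanding $\binom{S}{2}$ directly, where $S \doteq \sum_{i=1}^z a_i$) yields the multinomial-type expansion
\[
  \binom{S}{2} = \sum_{i=1}^z \binom{a_i}{2} + \sum_{1 \le i < j \le z} a_i a_j.
\]

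First I would observe that the cross term $\sum_{i<j} a_i a_j$ is \emph{strictly} positive: since $z > 1$ there are at least two indices, and since every $a_i$ is a strictly positive natural, already $a_1 a_2 \ge 1$. This gives the strict inequality $\binom{S}{2} > \sum_{i=1}^z \binom{a_i}{2}$. Then I would close the gap between $S$ and $d$: by hypothesis $S \le d$, and because $n \mapsto \binom{n}{2}$ is non-decreasing on the naturals we have $\binom{d}{2} \ge \binom{S}{2}$. Chaining the two steps gives $\binom{d}{2} \ge \binom{S}{2} > \sum_{i=1}^z \binom{a_i}{2}$, which is exactly the claim.

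There is essentially no hard part here; the only point requiring care is that \emph{both} hypotheses — $z > 1$ and the strict positivity of each $a_i$ — are genuinely needed to force the cross term to be nonzero, which is what makes the inequality strict rather than merely non-strict. (If $z=1$ were allowed, or if a block of size $0$ were permitted, the cross term could vanish.) A purely combinatorial rephrasing is equally transparent and avoids the algebra: $\binom{d}{2}$ counts \emph{all} pairs from a $d$-element ground set, whereas $\sum_i \binom{a_i}{2}$ counts only the pairs lying \emph{inside} $z \ge 2$ disjoint nonempty blocks of total size $S \le d$, so there is always at least one pair straddling two distinct blocks (and possibly further pairs touching the $d - S$ elements outside the blocks) that is counted on the left but not on the right.
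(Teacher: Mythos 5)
Your proof is correct and rests on the same key fact as the paper's: expanding the square of the sum $S=\sum_i a_i$ produces the cross term $\sum_{i<j}a_ia_j$, which is strictly positive precisely because $z>1$ and every $a_i\ge 1$. The only difference is presentational — you argue directly via $\binom{d}{2}\ge\binom{S}{2}>\sum_i\binom{a_i}{2}$, whereas the paper packages the same computation as a proof by contradiction — so the two arguments are essentially identical.
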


\begin{proof}
  Assume by contradiction that there exists a sequence $a_1,\dotsc,a_z$  of $1 <
  z \le d$ strictly positive naturals s.t.\ $\sum_{i=1}^z a_i \le d$ and for
  which
  \begin{equation}\label{eq:techbinom1}
    \binom{d}{2} \le \sum_{i=1}^z \binom{a_i}{2}
    = \frac{1}{2} \sum_{i=1}^z (a_i^2 - a_i) \enspace.
  \end{equation}
  It holds
  \[
    2 \binom{d}{2} = d^2 - d \ge {\left( \sum_{i=1}^z a_i \right)}^2 -
    \sum_{i=1}^z a_i,
  \]
  where the inequality comes from the fact that $q^2 - q \ge g^2 - g$ for any $q
  \ge 1$, $0 \le g \le q$. Expanding the r.h.s.~of the last inequality, we
  obtain
  \begin{align}
    2 \binom{d}{2} &\ge \sum_{i=1}^z a_i^2 + 2 \sum_{i=1}^z \sum_{h=i+1}^z a_i
    a_h - \sum_{i=1}^z a_i \nonumber\\
    & = \sum_{i=1}^z (a_i^2 - a_i) + 2 \sum_{i=1}^z \sum_{h=i+1}^z a_i
    a_h \enspace. \label{eq:techbinom2}
  \end{align}
  We can combine~\eqref{eq:techbinom1} and~\eqref{eq:techbinom2} as
  \[
    \sum_{i=1}^z (a_i^2 - a_i) + 2 \sum_{i=1}^z \sum_{h=i+1}^z a_i
    a_h \le 2 \binom{d}{2} \le \sum_{i=1}^z (a_i^2 - a_i),
  \]
  which is clearly impossible because the $a_i$'s are strictly positive, thus
  the second term on the leftmost side is strictly positive. Thus we reached a
  contradiction, and the sequence $a_1,\dotsc,a_z$ cannot exists.
\end{proof}

\begin{proof}[Proof of \cref{lem:maxbfcn}]
  We start by showing that if $v$ as in the thesis exists, then $\bfcn{G}{u} =
  \binom{\degree{G}{u}}{2}$. Such a $v$ must be unique, due to the restrictions
  on the degree of the nodes in $R$. It must then hold that $\neigh{G}{u} \cap
  \neigh{G}{w} = \emptyset$ for any $w \in L \setminus \{v, u\}$, as all $a \in
  \neigh{G}{u}$ have degree exactly two and $\{(u,a), (v,a)\} \subseteq E$. From
  this fact and Fact 1, we get that $\bfcnn{G}{u}{w} = 0$ for any $w
  \in L \setminus \{v, u\}$, and that $\bfcnn{G}{u}{v} =
  \binom{\degree{G}{u}}{2}$, as $\card{\neigh{G}{u} \cap \neigh{G}{v}} =
  \card{\neigh{G}{u}} = \degree{G}{u}$ from the hypothesis. The desired equality
  follows from these facts and the definition of $\bfcn{G}{u}$
  from Equation (1) in the main text.

  We now show that if $v$ as in the thesis does \emph{not} exists, it must be
  $\bfcn{G}{u} <\binom{\degree{G}{u}}{2}$.
  Assume that there is exactly one
  $v' \in L \setminus \{u\}$ s.t.\ $\neigh{G}{u} \cap \neigh{G}{v'} \neq
  \emptyset$. Then it must be $\card{\neigh{G}{u} \cap \neigh{G}{v'}}
  < \degree{G}{u}$, otherwise $v'$ would be as $v$ in the thesis, which we
  just said can not happen. Then, from Fact 1, we get $\bfcnn{G}{u}{v'} <
  \binom{\degree{G}{u}}{2}$. From this fact, the fact that $\bfcnn{G}{u}{w} = 0$
  for any $w \in L \setminus \{v', u\}$, and Eq.\ (1) from the text, we obtain
  the desired result.

  Assume instead that there are $1 < z \le d$ distinct nodes $v_1,\dotsc,v_z \in
  L \setminus \{u\}$ s.t.\ $\neigh{G}{u} \cap \neigh{G}{v_i} \neq \emptyset$.
  The desired result $\bfcn{G}{u} < \binom{\degree{G}{u}}{2}$ follows from
  \cref{lem:techbinom} using $a_i = \card{\neigh{G}{u} \cap \neigh{G}{v_i}}$, $1
  \le i \le z$.
\end{proof}

\begin{proof}[Proof of Theorem 1]

Let $G_\mathrm{b}$ and $G_\mathrm{e}$ be the graphs output by \cref{alg:const} with inputs $s$ and $t$.
\rev{For completeness, let us recall the five properties satisfied by such graphs.
\begin{enumerate}
  \item $G_\mathrm{b}$ and $G_\mathrm{e}$ have $7+a$ left nodes (denoted with the letter $x$) and $s+t+n+2+a$ right nodes (denoted with the letter $y$); 
  \item $G_\mathrm{b}$ and $G_\mathrm{e}$ have the same left and right degree
    sequences. In particular, $x_1$ and $x_2$ have degree $s$, $x_3$ and $x_4$
    have degree $t$, $x_5$ and $x_6$ have degree $n+1$, $y_i$ has degree 2 for $1 \le i \le s+t+n+1$, and all the other left and right nodes have degree $1$;
  \item $\bfc{G_\mathrm{b}} = \bfc{G_\mathrm{e}} = n + \binom{n}{2}$;
  \item $\card{\neigh{G_\mathrm{e}}{x_5} \cap \neigh{G_\mathrm{e}}{x_6}} = n+1$, which,
  with the previous point, implies that $x_5$ and $x_6$ belong to \emph{all}
  butterflies in $G_\mathrm{e}$, and every other node $y \in R$ belongs to no
  butterfly;
  \item $\card{\neigh{G_\mathrm{b}}{x_5} \cap \neigh{G_\mathrm{b}}{x_6}} = n$,
      $\card{\neigh{G_\mathrm{b}}{x_1} \cap \neigh{G_\mathrm{b}}{x_2}} = s$,
      $\card{\neigh{G_\mathrm{b}}{x_3} \cap \neigh{G_\mathrm{b}}{x_4}} = t$,
      which implies, with point 2, that $x_5$ and $x_6$ do \emph{not} belong to
      \emph{all} butterflies in $G_\mathrm{b}$.
  \end{enumerate}
}

  For each $i \in [1,z]$, we denote with $G^i$ the graph obtained from the
  application of $\sw^{p_1}_1, \dotsc, \sw^{p_i}_i$ to $G_\mathrm{b}$. By
  definition of $q$-BSO, each $G^i$ has the same left and right degree
  sequences, and $\bfc{G^i} = \bfc{G_\mathrm{b}} = \bfc{G_\mathrm{e}}$. We set
  $G^0 = G_\mathrm{b}$.

  From properties 4 and 5 of $G_\mathrm{b}$ and $G_\mathrm{e}$ listed above,
  there must be an index $\ell \in [1,z]$ s.t.\ in the graph $G^\ell$, $x_5$ and
  $x_6$ share $n+1$ neighbors. We now show that $x_5$ and $x_6$ must share
  exactly $n$ neighbors in $G^{\ell-1}$, i.e., $\card{\neigh{G^{\ell-1}}{x_5}
  \cap \neigh{G^{\ell-1}}{x_6}} = n$.

  Assume by contradiction that $\card{\neigh{G^{\ell-1}}{x_5} \cap
  \neigh{G^{\ell-1}}{x_6}}$ $=$ $r < n$. Then, $\bfcnn{G^{\ell-1}}{x_5}{x_6} =
  \binom{r}{2}$. From Equation (2) in the main text, given the degree sequence
  of the nodes in $L$, we can write

  \begin{equation}\label{eq:maintech0}
    \bfc{G^{\ell-1}} = \sum_{i=1}^5 \sum_{j = i+1}^6
    \bfcnn{G^{\ell-1}}{x_i}{x_j} =  n + \binom{n}{2} \enspace.
  \end{equation}
  It then must be
  \begin{equation}\label{eq:maintech1}
    \sum_{i=1}^4 \sum_{j = i+1}^6 \bfcnn{G^{\ell-1}}{x_i}{x_j} = n +
    \binom{n}{2} - \binom{r}{2} \enspace.
  \end{equation}
  It holds
  \[
    \sum_{i=1}^4 \sum_{j = i+1}^6 \bfcnn{G^{\ell-1}}{x_i}{x_j}  \le
    \sum_{i=1}^4 \bfcn{G^{\ell-1}}{i},
  \]
  as some butterflies may be counted twice in the sum on the r.h.s..
  From
  \cref{lem:maxbfcn} applied to each of $x_1$, $x_2$, $x_3$, and $x_4$, it holds
  that
  \begin{equation}\label{eq:maintech2}
    \sum_{i=1}^4 \bfcn{G^{\ell-1}}{i} \le 2 \binom{s}{2} + 2 \binom{t}{2} = 2n\enspace.
  \end{equation}
  Consider for now the case $r < n-1$. If we can show that
  \begin{equation}\label{eq:maintech3}
    2n < n + \binom{n}{2} - \binom{r}{2}
  \end{equation}
  then we would have reached a contradiction, because this inequality, together
  with~\cref{eq:maintech2}, implies that~\cref{eq:maintech1} cannot be true.
  The r.h.s.\ of~\cref{eq:maintech3} decreases as $r$ increases, so if we can
  show that~\cref{eq:maintech3} holds for the maximum value of $r = n - 2$,
  then it would hold for all $r \le n - 2$. For $r = n - 2$, we can
  rewrite~\cref{eq:maintech3} as
  \begin{equation}\label{eq:maintech4}
    2n < n + \binom{n}{2} - \binom{n - 2}{2},
  \end{equation}
  which is true for any $n > 3$, i.e., for all possible values of $s$ and $t$.
  Thus we reached a contradiction and it cannot be $r < n-1$.

  Consider now the case $r = n-1$. In this
    case,
  \[
    n + \binom{n}{2} - \binom{r}{2} = 2n - 1 \enspace.
  \]
  Using this fact and~\cref{eq:maintech1}, we can write
  \begin{equation}\label{eq:maintech5}
  \begin{split}
    2n - 1 = \sum_{i=1}^3 \sum_{j=i+1}^4 \bfcnn{G^{\ell-1}}{x_i}{x_j} & +\\
    \sum_{i=1}^4 \left( \bfcnn{G^{\ell-1}}{x_i}{x_5} \right. & +
    \left. \bfcnn{G^{\ell-1}}{x_i}{x_6} \right).
  \end{split}
  \end{equation}
  Now, since $r = n - 1$ and $\degree{G^{\ell-1}}{x_5} = \degree{G^{\ell-1}}{x_6} = n
  + 1$, it must hold
  \begin{equation}\label{eq:maintech6}
    \sum_{i=1}^4 \left( \bfcnn{G^{\ell-1}}{x_i}{x_5} + \bfcnn{G^{\ell-1}}{x_i}{x_6}
    \right) \le 2
  \end{equation}
  because $x_5$ and $x_6$ can share at most two neighbors each in $R$ with one
  of $x_1$, $x_2$, $x_3$, and $x_4$. Due to the limitations on the degree of
  the nodes in $R$, if any of $x_i$, $i=5,6$ shares a neighbor with any $x_j$,
  $j=1,2,3,4$, then $x_i$ cannot share the same neighbor with any other of
  $\{x_1, x_2, x_3, x_4\} \setminus \{x_j\}$. Thus,
  combining~\cref{eq:maintech5} and~\cref{eq:maintech6}, we get that
  \begin{equation}\label{eq:maintech7}
    2n - 1 \le \sum_{i=1}^3 \sum_{j=i+1}^4
    \bfcnn{G^{\ell-1}}{x_i}{x_j} + 2 \enspace.
  \end{equation}
  It holds
  \begin{align*}
    \sum_{i=1}^3 \sum_{j=i+1}^4 \bfcnn{G^{\ell-1}}{x_i}{x_j} &= \frac{1}{2}
    \sum_{i=1}^4 \sum_{\substack{j=1\\j\neq i}}^4
    \bfcnn{G^{\ell-1}}{x_i}{x_j} \\
    & \le \frac{1}{2} \sum_{i=1}^4 \bfcn{G^{\ell-1}}{x_i} \le n,
  \end{align*}
  where the last inequality comes from~\cref{eq:maintech2}. Combining the above
  with~\cref{eq:maintech7} we obtain
  \[
    2n -1 \le n + 2
  \]
  which is only true for $n \le 3$. But from our hypothesis on $s$ and $t$, it
  must be $n > 3$, so we reached a contradiction, and it cannot be $r = n-1$.

  \noindent
  Thus, it must be $\card{\neigh{G^{\ell-1}}{x_5} \cap \neigh{G^{\ell-1}}{x_6}} =
  n$, i.e., $\bfcnn{G^{\ell-1}}{x_5}{x_6} = \binom{n}{2}$. We now show that the
  remaining $n = \binom{s}{2} + \binom{t}{2}$ butterflies in $G^{\ell-1}$ are s.t.\
  $x_1$ and $x_2$ both belong to $\binom{s}{2}$ of them, and $x_3$ and $x_4$
  both belong to $\binom{t}{2}$ of them. In other words
  $\bfcnn{G^{\ell-1}}{x_1}{x_2} = \binom{s}{2}$, $\bfcnn{G^{\ell-1}}{x_3}{x_4} =
  \binom{t}{2}$, and $\bfcnn{G^{\ell-1}}{x_i}{x_j} = 0$ for any other $(i,j) \in
  \{ (i,j) \suchthat 1 \le i \le 4, i < j \le 6\} \setminus \{(1,2),
  (3,4)\}$.

  Clearly, it must hold $\bfcnn{G^{\ell-1}}{x_i}{x_5} = \bfcnn{G^{\ell-1}}{x_i}{x_6} =
  0$ because $x_5$ and $x_6$ can each at most share one neighbor with any of
  $x_i$, $1 \le i \le 4$, which is not sufficient to obtain any butterfly to which
  both $x_i$ and $x_5$ or both $x_i$ and $x_6$ may belong. Thus, it must hold
  \[
    \sum_{i=1}^3 \sum_{j=i+1}^4 \bfcnn{G^{\ell-1}}{x_i}{x_j} = n \enspace.
  \]
  We also have
  \[
    \sum_{i=1}^3 \sum_{j=i+1}^4 \bfcnn{G^{\ell-1}}{x_i}{x_j} =  \frac{1}{2}
    \sum_{i=1}^4 \bfcn{G^{\ell-1}}{x_i} \le n,
  \]
  where the last inequality comes from~\cref{eq:maintech2}. To obtain equality,
  it must hold
  \[
    \sum_{i=1}^4 \bfcn{G^{\ell-1}}{x_i} = 2 \binom{s}{2} + 2 \binom{t}{2} \enspace.
  \]
  From \cref{lem:maxbfcn}, we have that
  \[
    \bfcn{G^{\ell-1}}{x_i} \le \left\{
      \begin{array}{ll}
        \displaystyle\binom{s}{2} &\text{if}\ i=1,2\\
        \\
        \displaystyle\binom{t}{2} &\text{if}\ i=3,4
      \end{array}
      \right.,
    \]
    with equality only if each of $x_1$, $x_2$, $x_3$, and $x_4$ shares
    \emph{all} its neighbors with another one of them. Thus, it must be that
    $x_1$ shares all its neighbors with $x_2$, and that $x_3$ shares all its
    neighbors with $x_4$, leading to the desired results that
    $\bfcnn{G^{\ell-1}}{x_1}{x_2} = \binom{s}{2}$, $\bfcnn{G^{\ell-1}}{x_3}{x_4} =
    \binom{t}{2}$, $\bfcnn{G^{\ell-1}}{x_5}{x_6} = \binom{n}{2}$, and
    $\bfcnn{G^{\ell-1}}{x_i}{x_j} = 0$ for any other $(i,j) \in \{ (i,j)
    \suchthat 1 \le i \le 5, i < j \le 6\} \setminus \{(1,2), (3,4),
    (5,6)\}$.

  At step $\ell$, the number of common neighbors between $x_5$ and $x_6$ increases
  to $n+1$, meaning that the number of butterflies involving the pair $(x_5,
  x_6)$ increases by $\binom{n+1}{2} - \binom{n}{2} = n$. Since $n =
  \binom{s}{2} + \binom{t}{2}$, the $p_\ell$-BSO $\sw^{p_\ell}_\ell \doteq (S_\ell, \sigma)$ must transform
  all the butterflies in $G^{\ell-1}$ to which $x_5$ and $x_6$ do not already
  belong, into an equal number of butterflies in $G^\ell$ to which both $x_5$
  and $x_6$ belong.

  To this end, $\sw^{p_\ell}_\ell$ must swap at least
  \begin{squishlist}
    \item $1$ edge involving $x_5$ (or $x_6$) to connect $x_5$ (or $x_6)$ to a
      node in $\neigh{G^{\ell-1}}{x_6} \setminus \neigh{G^{\ell-1}}{x_5}$  (or
      in $\neigh{G^{\ell-1}}{x_5} \setminus \neigh{G^{\ell-1}}{x_6}$): this way,
      $\card{\neigh{G^{\ell-1}}{x_5} \cap \neigh{G^{\ell-1}}{x_6}} = n +1$,
      i.e., $x_5$ and $x_6$ will share $n+1$ neighbors in $G^\ell$;
    \item $s-1$ edges involving either $x_1$ or $x_2$ to reduce the number of
      neighbors shared by these two nodes by at least $s-1$: this way
      $\card{\neigh{G^\ell}{x_1} \cap \neigh{G^\ell}{x_2}} \leq 1$, implying
      $\bfcnn{G^\ell}{x_1}{x_2} = 0$; and
    \item $t-1$ edges involving either $x_3$ or $x_4$ to reduce the number of
      neighbors shared by these two nodes by at least $t-1$: this way
      $\card{\neigh{G^\ell}{x_3} \cap \neigh{G^\ell}{x_4}} \leq 1$ and thus
      $\bfcnn{G^\ell}{x_3}{x_4} = 0$.
    \end{squishlist}

  Thus, $S_\ell \doteq \{(u_1,a_1),\dotsc,(u_{p_\ell}, a_{p_\ell})\}$ must have the following properties:
  \begin{squishlist}
  \item $S_\ell$ must contain at
    least $s-1$ edges $(u_i, a_i)$ with $u_i \in \{x_1, x_2\}$ and s.t.\ the edge
    $(u_{\sigma(i)}, a_{\sigma(i)}) \in S_\ell$ has $u_{\sigma(i)} \notin
    \{x_1,x_2\}$. Indeed if that was not the case, the edge $(u_i,
    a_{\sigma(i)})$ would already exist in $G^{\ell-1}$, because, as previously
    discussed, $x_1$ and $x_2$ share \emph{all} their neighbors in $G^{\ell-1}$.
    Thus, $S_\ell$ must contain, in addition to the $s-1$ edges as above,
    \emph{another} $s-1$ edges whose endpoint in $L$ is neither $x_1$ nor $x_2$.
    Additionally, of these $s-1$ edges, only at most $4$ may have either $x_3$
    or $x_4$ as endpoint in $L$, as if there were more, then there would be $x_i
    \in \{x_1,x_2\}$ and $x_j \in \{x_3,x_4\}$ that would share at least two
    neighbors in $G^{\ell}$, and therefore it would be
    $\bfcnn{G^{\ell}}{x_i}{x_j} > 0$, which cannot be. $S_\ell$ is only required
    to have $t-1$ edges in the form $(u_i, a_i$) with $u_i \in \{x_3,x_4\}$,
    thus really only at most $\min\{4, t-1\}$ can be as above.
  \item Similarly, $S_\ell$ must contain at least $t-1$ edges $(u_i, a_i)$ with
    $u_i \in \{x_3, x_4\}$ and s.t.\ the edge $(u_{\sigma(i)}, a_{\sigma(i)})
    \in S_\ell$ has $u_{\sigma(i)} \notin \{x_3,x_4\}$. Indeed if that was not
    the case, the edge $(u_i, a_{\sigma(i)})$ would already exist in
    $G^{\ell-1}$, because, as previously discussed, $x_3$ and $x_4$ share
    \emph{all} their neighbors in $G^{\ell-1}$; Thus, $S_\ell$ must contain, in
    addition to the $t-1$ edges as above, \emph{another} $t-1$ edges whose node
    in $L$ is neither $x_3$ nor $x_4$. Additionally, of these $t-1$ edges, only
    at most $\min\{4, t-1\}$ may have either $x_1$ or $x_2$ as endpoint in $L$,
    as if there were more, then there would be $x_i \in \{x_1,x_2\}$ and $x_j
    \in \{x_3,x_4\}$ that would share at least two neighbors in $G^{\ell}$, and
    therefore it would be $\bfcnn{G^{\ell}}{x_i}{x_j} > 0$, which cannot be.
  \end{squishlist}

  Thus, $S_\ell$ must contain at least $(2 (s-1) - \min(4, t-1)) + (2 (t-1) -
  \min(4, t-1))$ edges. For any value of $t$, this quantity is at least $2
  (s-1)$. We then have $p_\ell \ge 2 (s-1) > \rev{\bar{q}}$, where the last inequality comes
  from the definition of $s$. This fact concludes the proof.
\end{proof}

\section{Bipartite Graph Generator}\label{sec:app:algo}

We present our algorithm to generate the two bipartite graphs $G_\mathrm{b}$
and $G_\mathrm{e}$ 
used in the proof of our main result.

\begin{algorithm}[th]
  \scriptsize
  \caption{Bipartite Graph Constructor}\label{alg:const}
  \DontPrintSemicolon%
  \KwIn{Naturals \rev{$s$} and \rev{$t$} with $\rev{s \neq t} \geq 2$}
  \KwOut{Two bipartite graphs with the same degree sequences and
  $\binom{n+1}{2}$ butterflies}
  $n \gets \binom{\rev{s}}{2} + \binom{\rev{t}}{2}$; $\mathrm{add} \gets \rev{s+t}-2$\;
  \lIf{\rev{$s$} is even}{%
    $\mathrm{add} \gets \mathrm{add} + 1$
  }
  \lIf{\rev{$t$} is even}{%
    $\mathrm{add} \gets \mathrm{add} + 1$
  }
  $L \gets [x_1, \dotsc, x_{7 + \mathrm{add}}]$; $R \gets [y_1, \dotsc, y_{\rev{s+t}+n+2+\mathrm{add}}]$\;
  \tcc{butterflies btw $x_1$ and $x_2$}
  $E_\mathrm{b} \gets \{(x_1, y_l), (x_2, y_l)\}$ for $l \in [1, \rev{s}]$\;\label{line:but_i}
  \tcc{butterflies btw $x_3$ and $x_4$}
  $E_\mathrm{b} \gets E_\mathrm{b} \cup \{(x_3, y_{\rev{s}+l}), (x_4, y_{\rev{s}+l})\}$ for $l \in [1, \rev{t}]$\;\label{line:but_j}
  \tcc{butterflies btw $x_5$ and $x_6$}
  $E_\mathrm{b} \gets E_\mathrm{b} \cup \{(x_5, y_{\rev{s+t}+l}), (x_6, y_{\rev{s+t}+l})\}$ for $l \in [1, n]$\;\label{line:but_n}
  \tcc{$x_5$ and $x_6$ have deg $n+1$; $y_{\rev{s+t}+n+1}$ has deg $2$}
  $E_\mathrm{b} \gets E_\mathrm{b} \cup \{(x_5, y_{\rev{s+t}+n+1}), (x_6, y_{\rev{s+t}+n+2}), (x_7, y_{\rev{s+t}+n+1})\}$\;\label{line:add}
  \tcc{auxiliary isolated edges}
  $E_\mathrm{b} \gets E_\mathrm{b} \cup \{(x_{7+l}, y_{\rev{s+t}+n+2+l})\}$ for $l \in [1, \mathrm{add}]$\;\label{line:disc}
  \tcc{butterflies btw $x_5$ and $x_6$}
  $E_\mathrm{e} \gets \{(x_5, y_{\rev{s+t}+l}), (x_6, y_{\rev{s+t}+l})\}$ for $l \in [1, n+1]$\;\label{line:but_n1}
  \tcc{$x_1$,$x_2$,$x_3$ and $x_4$ share at most 1 neighbor}
  $E_\mathrm{e} \gets E_\mathrm{e} \cup \{(x_1, y_1), (x_2, y_1), (x_3, y_{\rev{s}+1}), (x_4, y_{\rev{s}+1})\}$\;\label{line:common}
  $h_1 \gets \left\lfloor(\rev{s}-1)/2\right\rfloor$; $h_2 \gets \left\lfloor(\rev{t}-1)/2\right\rfloor$\;
  \tcc{we split $\rev{s}-1$ right nodes btw $x_1$ and $x_2$}
  $E_\mathrm{e} \gets E_\mathrm{e} \cup \{(x_1, y_{1+l}), (x_2, y_{1+h_1+l})\}$ for $l \in [1, h_1]$\;\label{line:half1}
  \tcc{we split $\rev{t}-1$ right nodes btw $x_3$ and $x_4$}
  $E_\mathrm{e} \gets E_\mathrm{e} \cup \{(x_3, y_{\rev{s}+1+l}), (x_4, y_{\rev{s}+1+h_2+l})\}$ for $l \in [1, h_2]$\;\label{line:half2}
  \tcc{$x_1$, $x_2$ have deg $\rev{s}$ and $x_3$, $x_4$ have deg $\rev{t}$}
  $a_1 \gets \rev{s} - (1 + h_1)$; $a_2 \gets \rev{t} - (1 + h_2)$\;
  $E_\mathrm{e} \gets E_\mathrm{e} \cup \{(x_1, y_{\rev{s+t}+n+2+l}), (x_2, y_{\rev{s+t}+n+2+a_1+l})\}$ for $l \in [1, a_1]$\;\label{line:ensurei}
  $E_\mathrm{e} \gets E_\mathrm{e} \cup \{(x_3, y_{\rev{s+t}+n+2+2a_1+l}), (x_4, y_{\rev{s+t}+n+2+2a_1+a_2+l})\}$ for $l \in [1, a_2]$\;\label{line:ensurej}
  \tcc{$x_7$ and $y_{\rev{s+t}+n+2}$ have deg $1$}
  $E_\mathrm{e} \gets E_\mathrm{e} \cup \{(x_7, y_{\rev{s+t}+n+2})\}$\;\label{line:x7}
  \tcc{the first $(\rev{s+t})$ right nodes have deg $2$}
  $E_\mathrm{e} \gets E_\mathrm{e} \cup \{(x_{7+l}, y_{1+l})\}$ for $l \in [1, \rev{s}-1]$\;\label{line:ensureright_s}
  $E_\mathrm{e} \gets E_\mathrm{e} \cup \{(x_{7+\rev{s}-1+l}, y_{\rev{s}+1+l})\}$ for $l \in [1, \rev{t}-1]$\;
  $\mathrm{add} \gets 0$\;
  \lIf{\rev{$s$} is even}{%
    $E_\mathrm{e} \gets E_\mathrm{e} \cup \{(x_{7+\rev{s+t}-1}, y_\rev{s})\}$; $\mathrm{add} \gets \mathrm{add} + 1$
  }
  \lIf{\rev{$t$} is even}{%
    $E_\mathrm{e} \gets E_\mathrm{e} \cup \{(x_{7+\rev{s+t}-1+\mathrm{add}}, y_{\rev{s+t}})\}$
  }\label{line:ensureright_e}
  $G_\mathrm{b} \gets (L, R, E_\mathrm{b})$; $G_\mathrm{e} \gets (L, R, E_\mathrm{e})$\;
  \Return{$G_\mathrm{b}$, $G_\mathrm{e}$}\;
\end{algorithm}

The algorithm (pseudocode in \cref{alg:const}) receives in input two naturals $\rev{s \neq t} \in \mathbb{N}\, \wedge \, \rev{s,t} \geq 2$, and generates two bipartite
$G_\mathrm{b} = (L, R, E_\mathrm{b})$ and $G_\mathrm{e} = (L, R, E_\mathrm{e})$
with the same left and right degree sequence and each with $\binom{n+1}{2}$ butterflies, where
$n = \binom{\rev{s}}{2} + \binom{\rev{t}}{2}$.

The algorithm starts with the creation of the set of left nodes $L$ (any node in this set will be denoted as $x_w$, for some $w$) and of right nodes $R$ (any
node in this set will be denoted as $y_w$, for some $w$), equal for both graphs.
Then, it populates the edge set $E_\mathrm{b}$ of $G_\mathrm{b}$ and the edge set $E_\mathrm{e}$ of $G_\mathrm{e}$.
In $G_\mathrm{b}$, the butterflies involve three pairs of left nodes: $(x_1, x_2)$, $(x_3, x_4)$, and $(x_5, x_6)$.
Nodes $x_1$ and $x_2$ have degree \rev{$s$} and share \rev{$s$} neighbors (line~\ref{line:but_i});
nodes $x_3$ and $x_4$ have degree \rev{$t$} and share \rev{$t$} neighbors (line~\ref{line:but_j});
and nodes $x_5$ and $x_6$ have degree $n+1$ and share $n$ neighbors (line~\ref{line:but_n}).
In $G_\mathrm{e}$, the butterflies involve only the left nodes $x_5$ and $x_6$, which share $n+1$ neighbors (line~\ref{line:but_n1}).

We will construct the two edge sets so that any other pair of left nodes share at most one neighbor.
Thus, it hols that $\bfc{G_\mathrm{b}} = \binom{\rev{s}}{2} + \binom{\rev{t}}{2} + \binom{n}{2} = \binom{n+1}{2} = \bfc{G_\mathrm{e}}$.
Nodes $x_5$ and $x_6$ have degree $n+1$ in both graphs, but, so far, in $G_\mathrm{b}$ we have added only $n$ edges to them.
Similarly, node $y_{\rev{s+t}+n+1}$ have degree $2$ in $G_\mathrm{e}$, but, so far, in $G_\mathrm{b}$ it is connected to only one left node.
Therefore, we insert in $E_\mathrm{b}$ one edge involving $x_5$, one edge involving $x_6$ (we connect them to different right nodes to avoid creating other butterflies between them), and one edge involving $y_{\rev{s+t}+n+1}$ (line~\ref{line:add}).
The construction of $E_\mathrm{b}$ is completed the addition of up to
\rev{$s+t$} isolated edges, i.e., edges not connected with any other edge (line~\ref{line:disc}).
These edges do not participate in any butterfly, because they involve nodes with degree $1$.
They are needed to guarantee that $G_\mathrm{b}$ and $G_\mathrm{e}$ have the
same degree sequences, as it will become evident as we outline the other edges included in $E_\mathrm{e}$. 

We add edges to $E_\mathrm{e}$ in such a way that the pairs of vertices $(x_1, x_2)$ and $(x_3, x_4)$ have no more than one neighbor in common, so they do not belong to any butterfly. 
The idea is to \textbf{(i)} connect both $x_1$ and $x_2$ to $y_1$ (line~\ref{line:common}), \textbf{(ii)} divide the remaining nodes to which $x_1$ is connected
in $G_\mathrm{b}$ into two groups $(y_2, \dotsc, y_{h_1})$ and $(y_{h_1+1}, \dotsc, y_{2h_1 + 1})$ with $h_1 = \left\lfloor(\rev{s}-1)/2\right\rfloor$, and \textbf{(iii)} insert into $E_\mathrm{e}$ one edge between $x_1$ and each node in the first group, and one edge between $x_2$ and each node in the second group
(line~\ref{line:half1}). 
Similarly, to ensure that $x_3$ and $x_4$ only have one neighbor in common, we \textbf{(i)} connect both $x_3$ and $x_4$ to $y_{\rev{s}+1}$ (line~\ref{line:common}), \textbf{(ii)} divide the remaining nodes to which $x_3$ is connected
in $G_\mathrm{b}$ into two groups $(y_{\rev{s}+2}, \dotsc, y_{\rev{s}+2+h_2})$ and $(y_{\rev{s}+3+h_2}, \dotsc, y_{\rev{s}+1+2h_2})$ with $h_2 = \left\lfloor(\rev{t}-1)/2\right\rfloor$, and \textbf{(iii)} insert into $E_\mathrm{e}$ one edge between $x_3$ and each node in the first group, and one edge between $x_4$ and each node in the second group (line~\ref{line:half2}). 
So far, in $E_\mathrm{e}$ we have added only $(1 + h_1)$ of the \rev{$s$} neighbors that $x_1$ and $x_2$ must have, and only $(1 + h_2)$ of the \rev{$t$} neighbors that $x_3$ and $x_4$ must have.
Thus, we include $\rev{s} - (1 + h_1)$ edges to different right nodes (to avoid creating butterflies between them) for $x_1$ and $x_2$ (line~\ref{line:ensurei}), and $\rev{t} - (1 + h_2)$ edges to different right nodes for $x_3$ and $x_4$ 
(line~\ref{line:ensurej}).
Similarly, so far we have added only one edge to the right nodes $y_2, \dotsc, y_{\rev{s}-1}, y_{\rev{s}+2}, \dotsc, y_{\rev{s+t}-1}$, and up to one edge to the right nodes $y_\rev{s}$ and $y_{\rev{s+t}}$. 
In fact, depending on the value of $h_1$ and $h_2$, such nodes are included/excluded from the group of nodes connected to $x_2$ and $x_4$, respectively. 
Since all these nodes have degree $2$ in $G_\mathrm{b}$, we add the missing edges to $E_\mathrm{e}$ (lines~\ref{line:ensureright_s}--\ref{line:ensureright_e}).
Lastly, nodes $x_7$ and $y_{\rev{s+t}+n+2}$ have degree $1$ in $G_\mathrm{b}$, and thus they must be connected to one neighbor also in $G_\mathrm{e}$ (line~\ref{line:x7}).
Finally, the two
graphs $G_\mathrm{b} \doteq (L, R, E_\mathrm{b})$ and $G_\mathrm{e} \doteq (L,
R, E_\mathrm{e})$ are returned.

\smallskip

\section{Connecting Arbitrary Bipartite Graphs in the State Space via $q$-BSO}\label{sec:app:q5}

This section shows how to build a $q$-BSO that transforms an arbitrary bipartite graph $G' \doteq (L, R, E')$ into another arbitrary graph $G'' \doteq (L, R, E'')$ with the same left and right degree sequences and number of butterflies. 
The $q$-BSO needs to swap all links except those that are in common between the two graphs, i.e., $q = \lvert E' \setminus E'' \rvert$.
Let $E' \cap E'' = \mylist{e'_1,\dotsc,e'_c} = \mylist{e''_1,\dotsc,e''_c}$ be the list of $c$ edges in common to the two graphs, $E'=\mylist{e'_1,\dotsc,e'_{\card{E'}}}$ the list of edges of $G'$, $E''=\mylist{e''_1,\dotsc,e''_{\card{E'}}}$ the list of edges of $G''$, and $q = \card{E'} - c$ the number of edges unique to $G'$. 
  We now show a $q$-BSO operation $\sw \doteq (\mylist{e'_{c+1},\dotsc,e'_{|E'|}}, \sigma)$ that
  transforms $G'$ into $G''$. We build the derangement $\sigma$ incrementally, denoting with
  $\mathrm{\hat{Im}}(\sigma)$ the \emph{current} (i.e., as $\sigma$ is being
  built) image of $\sigma$. At the beginning of the construction process,
  $\mathrm{\hat{Im}}(\sigma) = \emptyset$, and at the end,
  $\mathrm{\hat{Im}}(\sigma) = \mathrm{Im}(\sigma) =
  \{1, \dotsc, q\}$. 
  For each $i \in \{1, \dotsc, q\}$, let $e'_{c+i} \doteq (u, v)$ and $k \in \{1, \dotsc, q\}$ be any index such that $k \notin \mathrm{\hat{Im}}(\sigma)$ and $e''_{c+k} \doteq (u, w)$. Then, we set $\sigma(i)
  \doteq k$. The index $k$ must always exist because $G'$ and $G''$ have the
  same degree sequences.

\end{document}